\newtheorem{proposition}{Proposition}
\newtheorem{corollary}{Corollary}
\begin{document}
\title{
	On modelling bicycle power-meter measurements:~Part~II \\
	{\Large Relations between rates of change of model quantities}
}
\author{
	Tomasz Danek%
	\footnote{
		AGH--University of Science and Technology, Krak\'{o}w, Poland, \texttt{tdanek@agh.edu.pl}
	}\,,
	Michael A. Slawinski%
	\footnote{
		Memorial University of Newfoundland, Canada, \texttt{mslawins@mac.com}
	}\,,
	Theodore Stanoev%
	\footnote{
		Memorial University of Newfoundland, Canada, \texttt{theodore.stanoev@gmail.com}
	}
}
\date{December~23, 2020}
\maketitle
\begin{abstract}
Power-meter measurements are used to study a model that accounts for the use of power by a cyclist.
The focus is on relations between rates of change of model quantities, such as power and speed, both in the context of partial derivatives, where other quantities are constant, and Lagrange multipliers, where other quantities vary to maintain the imposed constraints.
\end{abstract}
\section{Introduction}
Using power meters to study cycling performance allows us to gain quantitative information about relations whose qualitative aspects are known based on observations; for instance, riding with a given speed with a tailwind requires less effort than riding with the same speed against a headwind.
The quantification of such a relation, however, is necessary to proceed with an optimization to achieve\,---\,under constraints imposed by the capacity of a cyclist\,---\,the least time to cover a distance that is subject to winds and contains flats, hills and descents.

Many studies examine the physics of cycling.
This article is the second part of~\citet{DSSbici1}, which also contains the pertinent bibliography.
Herein\,---\,using a model relating power-meter measurements to the motion of a bicycle, examined by~\citet{DSSbici1}\,---\,we formulate expressions that allow us to quantify relations between the rates of change of parameters contained within this model.

We begin this paper by presenting the expression to account, by modelling, for the values measured by a power meter.
Using this expression and the implicit function theorem, we derive explicit expressions of the rates between the model parameters.
We complete this paper by interpreting and comparing quantitive results based on power-meter and GPS measurements collected on a flat course and an inclined course.
Both are in Northwestern Italy; the former is between Rivalta Bormida and Pontechino, in Piemonte; the latter is between Rossiglione and Tiglieto, in Liguria.
In the appendices, we compare the flat-course results to optimizations based on the Lagrange multipliers, and comment on the bijection between the generated power and the bicycle speed.
\section{Formulation}
\subsection{Measurements and model}
\label{sec:MM}
To account for the cyclist's use of the power measured by a power meter, we consider,
\begin{equation}
	\label{eq:formula}
	P
	=
	F_{\!\leftarrow}\,V_{\!\rightarrow}
	\,,
\end{equation}
where $P$ stands for the value of the required power, $F_{\!\leftarrow}$ for the forces opposing the motion, and $V_{\!\rightarrow}$ for the ground speed of the bicycle.

Explicitly, we assume that \citep[e.g.,][]{DSSbici1}
\begin{equation}
	\label{eq:P}
	P
	=
	\underbrace{
		\frac{
			mg\sin\theta
			+
			m\,a
			+
			{\rm C_{rr}}mg\cos\theta
			+
			\tfrac{1}{2}\,\eta\,{\rm C_{d}A}\,\rho\,
			\left(
				V_{\!\rightarrow} + w_{\leftarrow}
			\right)^{2}
		}{
			1-\lambda
		}
	}_{F_{\!\leftarrow}}
	V_{\!\rightarrow}\,;
\end{equation}
herein, $m$ is the mass of the cyclist and the bicycle, $g$ is the acceleration due to gravity, $\theta$ is the slope, $a$ is the change of speed, $\rm C_{rr}$ is the rolling-resistance coefficient, $\rm C_{d}A$ is the air-resistance coefficient, $\rho$ is air density, $V_{\!\rightarrow}$ is the ground-speed of the bicycle, $w_{\leftarrow}$ is the wind component opposing the motion, $\lambda$ is the drivetrain-resistance coefficient, $\eta$ is a quantity that ensures the proper sign for the tailwind effect, $w_{\leftarrow}<-V_{\!\rightarrow}\iff\eta=-1$\,, otherwise, $\eta=1$\,; throughout this work,~$\eta=1$\,.

To estimate quantities that appear on the right-hand side of equation~(\ref{eq:P})\,---\,specifically, $\rm C_{d}A$\,, $\rm C_{rr}$ and $\lambda$\,---\,given the measurement,~$P$\,, we write
\begin{equation}
	\label{eq:model}
	f
	=
	P
	-
	\underbrace{
		\frac{
			mg\sin\theta\,V_{\!\rightarrow}
			+
			m\,a\,V_{\!\rightarrow}
			+
			{\rm C_{rr}}mg\cos\theta\,V_{\!\rightarrow}
			+
			\tfrac{1}{2}\,\eta\,{\rm C_{d}A}\,\rho
			\left(V_{\!\rightarrow}+w_{\leftarrow}\right)^{2}
			V_{\!\rightarrow}
		}{
			1-\lambda
		}
	}_{F_{\!\leftarrow}
	V_{\!\rightarrow}}
	\,,
\end{equation}
and minimize the misfit,~$\min f$\,, as discussed by \citet{DSSbici1}.
\subsection{Implicit function theorem}
We seek relations between the ratios of quantities on the right-hand side of equation~(\ref{eq:P}).
To do so\,---\,since $f$\,, stated in expression~(\ref{eq:model}), possesses continuous partial derivatives in all its variables at all points, except at ${\lambda=1}$\,, which is excluded by mechanical considerations, and since ${f=0}$\,, as a consequence of equation~(\ref{eq:formula})\,---\,we invoke the implicit function theorem to write
\begin{equation}
	\label{eq:Thm}
	\dfrac{\partial y}{\partial x}
	=
	-\dfrac{
		\dfrac{\partial f}{\partial x}
	}{
		\dfrac{\partial f}{\partial y}
	}
	=:
	-\dfrac{\partial_{x}f}{\partial_{y}f}
	\,,
\end{equation}
where $x$ and $y$ are any two quantities among the arguments of
\begin{equation}
	\label{eq:f}
	f(P,m,g,\theta,a,{\rm C_{rr}},{\rm C_{d}A},\rho,V_{\!\rightarrow},w_{\leftarrow},\lambda)
	\,.
\end{equation}
\subsection{Expressions of partial derivatives}
\label{sub:Partials}
To use formula~(\ref{eq:Thm}), in the context of expression~(\ref{eq:model}), we obtain all partial derivatives of $f$\,, with respect to its arguments.
\begin{equation*}
	\partial_{P}f = 1
\end{equation*}
\begin{equation*}
	\partial_{m}f 
	= 
	-\dfrac{
		\left(
			a 
			+ 
			g\left({\rm C_{rr}}\cos\theta+\sin\theta\right)
		\right)
		V_{\!\rightarrow}
	}{
		1-\lambda
	}
\end{equation*}
\begin{equation*}
	\partial_{\theta}f 
	= 
	-\dfrac{
		m\,g\left(\cos\theta-{\rm C_{rr}}\sin\theta\right)
		V_{\!\rightarrow}
	}{
		1-\lambda
	}
\end{equation*}
\begin{equation*}
	\partial_{a}f = -\dfrac{m\,V_{\!\rightarrow}}{1-\lambda}
\end{equation*}
\begin{equation*}
	\partial_{\rm C_{rr}}f 
	= 
	-\dfrac{m\,g\,\cos\theta\,V_{\!\rightarrow}}{1-\lambda}
\end{equation*}
\begin{equation*}
	\partial_{\rm C_{d}A}f
	= 
	-\dfrac{
		\eta\,\rho
		\left(V_{\!\rightarrow}+w_{\leftarrow}\right)^{2}
		V_{\!\rightarrow}
	}{
		2\,(1-\lambda)
	}
\end{equation*}
\begin{equation*}
	\partial_{\rho}f 
	= 
	-\dfrac{
		\eta\,{\rm C_{d}A}
		\left(V_{\!\rightarrow}+w_{\leftarrow}\right)^{2}
		V_{\!\rightarrow}
	}{
		2\,(1-\lambda)
	}
\end{equation*}
\begin{equation*}
	\partial_{V_{\!\rightarrow}}f
	=
	-\dfrac{
		2\,m\,a
		+
		\eta\,\rho\,{\rm C_{d}A}
		\left(V_{\!\rightarrow}+w_{\leftarrow}\right)
		\left(3\,V_{\!\rightarrow}+w_{\leftarrow}\right)
		+
		2\,m\,g\left({\rm C_{rr}}\cos\theta+\sin\theta\right)
	}{
		2\,(1-\lambda)
	}
\end{equation*}
\begin{equation*}
	\partial_{w_{\leftarrow}}f
	=
	-\dfrac{
		\eta\,\rho\,{\rm C_{d}A}
		\left(V_{\!\rightarrow}+w_{\leftarrow}\right)
		V_{\!\rightarrow}
	}{
		1-\lambda
	}
\end{equation*}
\begin{equation*}
	\partial_{\lambda}f 
	=
	-\dfrac{
		\left(
			2\,m\,a
			+
			\eta\,\rho\,{\rm C_{d}A}
			\left(V_{\!\rightarrow}+w_{\leftarrow}\right)^{2}
			+
			2\,m\,g\left(
				{\rm C_{rr}}\cos\theta+\sin\theta
			\right)
		\right)
		V_{\!\rightarrow}
	}{
		2\,(1-\lambda)^2
	}
\end{equation*}
In accordance with the definition of a partial derivative, all variables in expression~(\ref{eq:f}) are constant, except the one with respect to which the differentiation is performed.
This property is apparent in Appendix~\ref{app:ConstraintPower}, where we examine a relation between differences and derivatives.
\begin{table}
{\small
\begin{center}
{\tabcolsep4pt
{\setlength{\tabcolsep}{15pt}
\renewcommand{\arraystretch}{2.5}
\begin{tabular}{|c||c|c|} \hline
Partial derivative  & Flat course & Inclined course \\
\hline\hline
$\partial_{P}f$ & 1 & 1\\
\hline
$\partial_{m}f$ & 
$-0.246996\,\pm\,0.582875$ & $-2.46314\,\pm\,4.64435$ \\
\hline
$\partial_{\theta}f$ & 
$-11868.6\,\pm\,1108.5$ & $-4636.6\,\pm\,234.084$ \\
\hline
$\partial_a f$ & 
$-1209.85\,\pm\,112.997$ & $-473.421\,\pm\,23.6997$\\
\hline
$\partial_{\rm C_{rr}}f$ & 
$-11868.6\,\pm\,1108.5$ & $-4639.37\,\pm\,233.443$\\
\hline
$\partial_{\rm C_{d}A}f$ & 
$-724.823\,\pm\,203.089$ & $-42.6594\,\pm\,6.38365$\\
\hline
$\partial_{\rho}f$ & 
$-156.937\,\pm\,44.0089$ & $-9.86645\,\pm\,1.47982$\\
\hline
$\partial_hf$ & 
$0.0224033\,\pm\,0.00628245$ & $0.00136659\,\pm\,0.00020498$\\
\hline
$\partial_{V_{\!\rightarrow}}f$ &  
$-56.5462\,\pm\,11.8207$  & $-74.4293\,\pm\,124.542$\\
\hline
$\partial_{w_{\leftarrow}} f$ & 
$-35.9584\,\pm\,6.72944$ & $-5.57109\,\pm\,0.559069$\\
\hline
$\partial_{\lambda}f$ & 
$-224.398\,\pm\,88.3944$ & $-293.684\,\pm\,531.407$\\
\hline
\end{tabular}}}
\end{center}
\caption{\small Values of partial derivatives\,---\,following formul{\ae} in Section~\ref{sub:Partials}\,---\,with the power-meter and GPS measurements collected on a flat and inclined courses}
\label{table:Partials}}
\end{table}
\subsection{Values of partial derivatives}
\subsubsection{Common input values}
To use the partial derivatives stated in Section~\ref{sub:Partials}, we consider measurements collected during two rides.
The flat-course measurements correspond to a nearly flat course.
The inclined-course measurements correspond to an uphill with a nearly constant inclination.
For both the flat and inclined course, we let $m=111$ and $g=9.81$\,.%
\footnote{For consistency with power meters, whose measurements are expressed in watts, which are $\rm{kg\,m^2/s^3}$\,, we use the {\it SI} units for all quantities.
Mass is given in kilograms,~$\rm{kg}$\,, length in metres,~$\rm{m}$\,, and time in seconds,~$\rm{s}$\,; hence, speed is in metres per second, change of speed in metres per second squared, force in newtons,~$\rm{kg\,m/s^2}$\,, and work and energy in joules,~$\rm{kg\,m^{2}/s^{2}}$\,; angle is in radians.}
Other values are stated in Sections~\ref{sec:FlatCourseVals} and \ref{sec:InclCourseVals}.
\subsubsection{Flat-course input values}
\label{sec:FlatCourseVals}
According to \citet{DSSbici1}, the flat-course input values are as follows.
The average measured power, $\overline P=258.8\pm57.3$\,, and speed,~$\overline V_{\!\rightarrow}=10.51\pm0.9816$\,.
The values inferred by modelling are ${\rm C_{d}A}=0.2607\pm0.002982$\,, ${\rm C_{rr}}=0.00231\pm0.005447$ and $\lambda=0.03574\pm0.0004375$\,.
We set $w_{\leftarrow}=0\implies\eta=1$ and $\rho=1.20406\pm0.000764447$\,.
The average slope is $\overline\theta=0.002575\pm0.04027$\,, which indicates a flat course.
The change of speed is $\overline a=0.006922\pm0.1655$\,, which indicates a steady tempo.
Thus, we set $\overline\theta = \overline a = 0$\,.

The corresponding values of partial derivatives, formulated in Section~\ref{sub:Partials}, are listed in the left-hand column of Table~\ref{table:Partials}.
\subsubsection{Inclined-course input values}
\label{sec:InclCourseVals}
\begin{figure}
	\centering
	\includegraphics[scale=0.7]{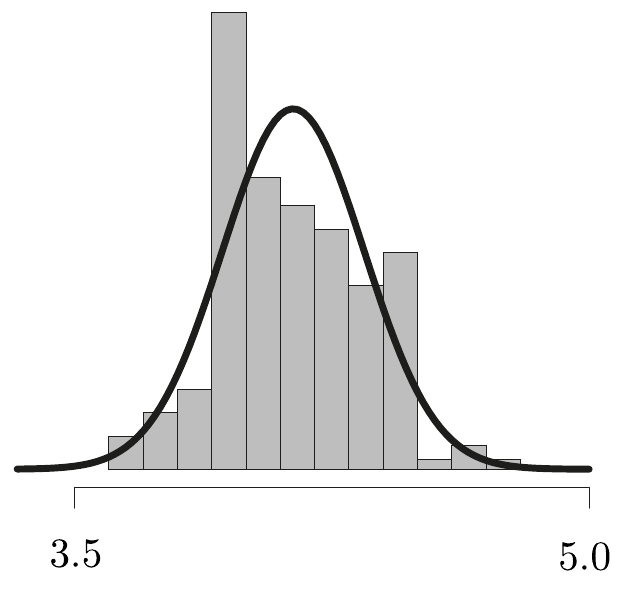}\qquad
	\includegraphics[scale=0.7]{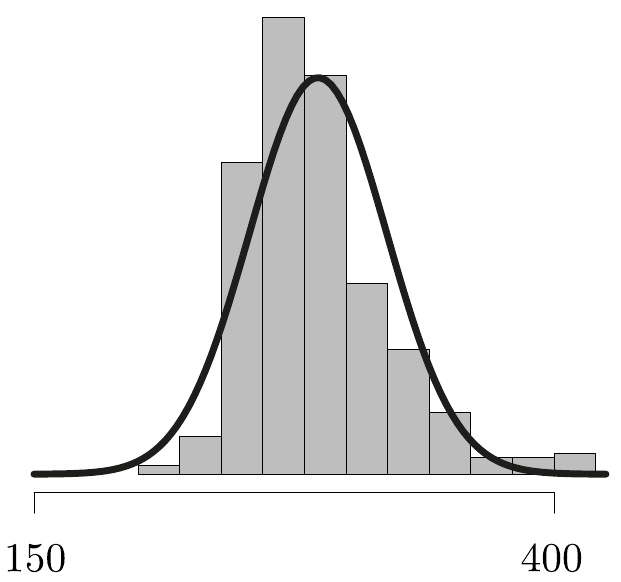}
	\caption{\small
		Left-hand plot: ground speed, from GPS measurements, $\overline V_{\!\rightarrow}=4.138\pm0.2063$\,;
		right-hand plot: power, from power meters, $\overline{P}=286.6\pm33.07$
	}
\label{fig:FigSpeedPowerSteep}
\end{figure}
\begin{figure}
	\centering
	\includegraphics[scale=0.7]{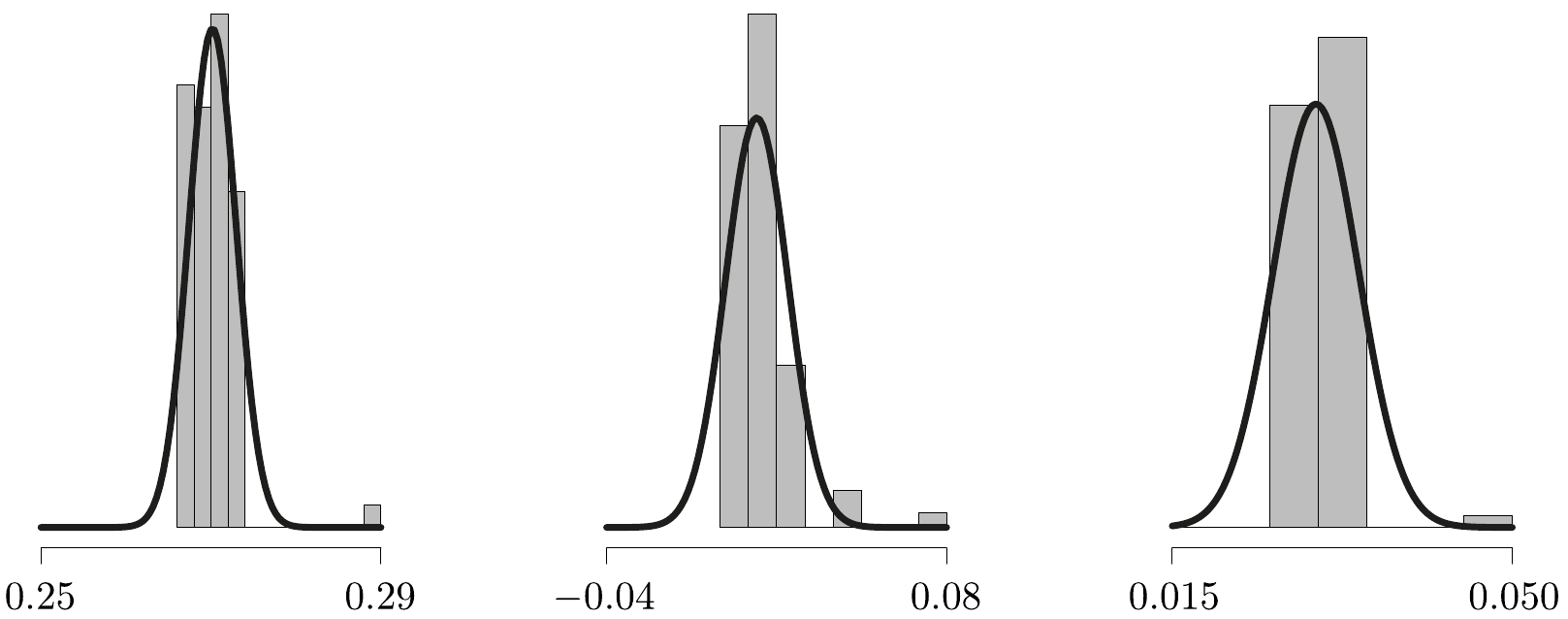}
	\caption{\small
		Optimal values;
		left-hand plot:~${\rm C_{d}A}=0.2702\pm0.002773$\,;
		middle plot:~${\rm C_{rr}}=0.01298\pm0.011$\,;
		right-hand plot:~$\lambda=0.02979\pm0.004396$}
	\label{fig:FigParSteep}
\end{figure}
Following the method outlined by \citet{DSSbici1}, we calculate the inclined-course values.
The data are grouped in eleven speed intervals whose centres range from $3.7$ to $4.7$\,, and which contain three-hundred-and-ninety-two data points.
The mode is $4.2$\,, and is represented by seventy-eight data points.
The distributions of the ground speed and power are illustrated in Figure~\ref{fig:FigSpeedPowerSteep}; their means are $\overline V_{\!\rightarrow}=4.138\pm0.2063$ and $\overline P=286.5783\pm33.11394$\,, respectively.
The distributions of the inferred parameters are illustrated in Figure~\ref{fig:FigParSteep}; their values are ${\rm C_{d}A}=0.2702\pm0.002773$\,, ${\rm C_{rr}}=0.01298\pm0.011$\,, $\lambda=0.02979\pm0.004396$\,.

The average slope of the inclined course is $\overline\theta=0.04592\pm0.1106$\,, which is $2.63^\circ$ and $4.60\%$\,.
This course is known among cyclists of the region as a particularly constant incline.
The change of speed throughout the ride is $\overline a=0.001011\pm0.1015$\,, which indicates a steady pace.
In view of the constantness and steadiness, we set, $\overline\theta=0.04592$ and $\overline a = 0$\,.
Since the change of altitude is negligible\,---\,in the context of air density\,---\,we set $\overline\rho=1.168\pm0.001861$\,, which\,---\,under standard meteorological conditions\,---\,corresponds to the altitude of~$400$\,.
We set $w_{\leftarrow}=0\implies\eta=1$\,.

The corresponding values of partial derivatives, formulated in Section~\ref{sub:Partials}, are listed in the right-hand column of Table~\ref{table:Partials}.
\subsubsection{Theorem requirements}
\begin{figure}
	\centering
	\includegraphics[scale=0.7]{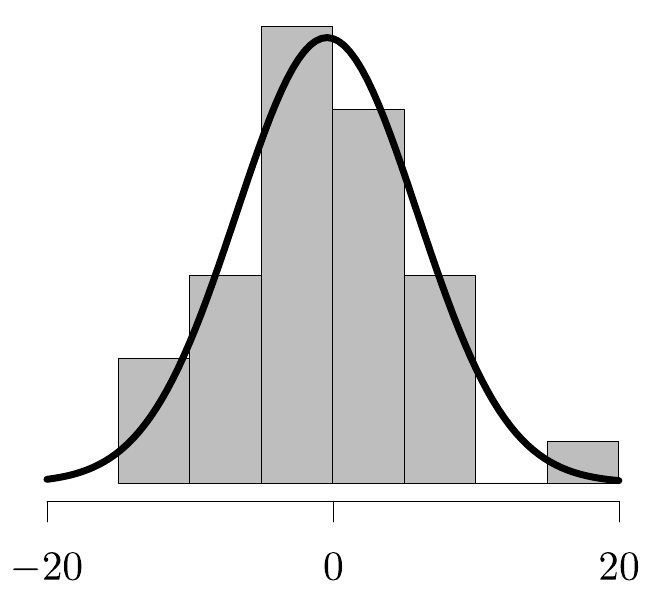}
	\includegraphics[scale=0.7]{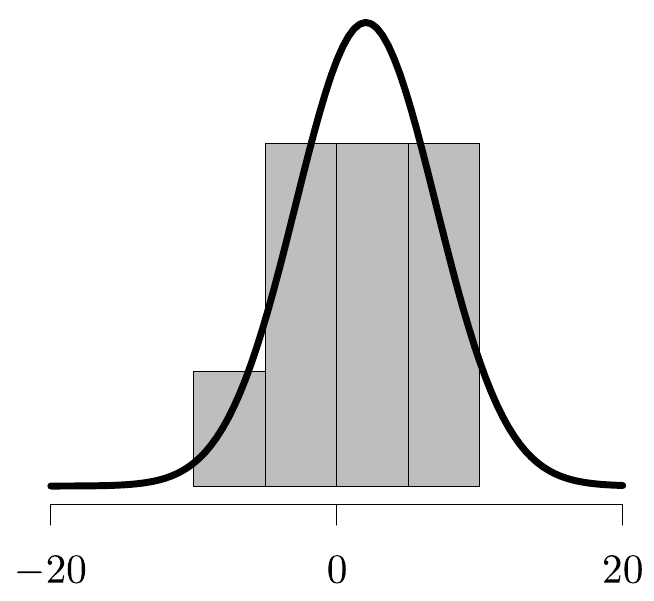}
	\caption{\small Misfit of equation~(\ref{eq:model}): left-hand plot: flat course, $f=0.4137\pm6.321$\,; right-hand plot: inclined course, $f=2.03\pm4.911$}
	\label{fig:Figf}
\end{figure}
As required by the implicit function theorem and as shown in Figure~\ref{fig:Figf}, $f=0$\,, in the neighbourhood of the maxima of the distributions, for both the flat and inclined courses.
Also, as required by the theorem, in formula~(\ref{eq:Thm}), and as shown in Table~\ref{table:Partials}, $\partial_{y}f\neq0$\,, in the neighbourhoods of interest, for either course.

Notably, the similarity of a horizontal spread for both plots of Figure~\ref{fig:Figf} indicates that the goodness of fit of a model is similar for both courses.
The spread is slightly narrower for the inclined course; this might be a result of a lower average speed,~$\overline V_{\!\rightarrow}$\,, which allows for more data points for a given distance and, hence, a higher accuracy of information.
\section{Interpretation}
\subsection{Model considerations}
\label{sec:MC}
The misfit minimization of equation~(\ref{eq:model}), $\min f$\,, treats $\rm C_{d}A$\,, $\rm C_{rr}$ and $\lambda$ as adjustable parameters.
The values in Table~\ref{table:ModelRates} are the changes of $\rm C_{d}A$ due to a change in $\rm C_{rr}$ or $\lambda$\,; in either case, the other quantities are kept constant.
Let us examine the first row.

For the flat course\,---\,in the neighbourhood of $\overline V_{\!\rightarrow}=10.51$ and $\overline P=258.8$\,, wherein ${\rm C_{d}A}=0.2607$ and ${\rm C_{rr}}=0.00231$\,---\,$\partial_{\rm C_{rr}}{\rm C_{d}A}=-16.3745$ and, in accordance with expression~(\ref{eq:Thm}), its reciprocal is $\partial_{\rm C_{d}A}{\rm C_{rr}}=-0.0610705$\,.
We write the corresponding differentials as
\begin{equation*}
	{\rm d(C_{d}A)}
	=
	\dfrac{\partial{\rm C_{d}A}}{\partial{\rm C_{rr}}}\,
	{\rm d(C_{rr})}
	=
	{-16.3745}\,{\rm d(C_{rr})}
\end{equation*}
and
\begin{equation*}
	{\rm d(C_{rr})}
	=
	\dfrac{\partial{\rm C_{rr}}}{\partial{\rm C_{d}A}}\,
	{\rm d(C_{d}A)}
	=
	{-0.0610705}\,{\rm d(C_{d}A)}
	\,;
\end{equation*}
in other words,
an increase of $\rm C_{rr}$ by a unit corresponds to a decrease of  $\rm C_{d}A$ by $16.3745$ units, and an increase of $\rm C_{d}A$ by a unit corresponds to a decrease of  $\rm C_{rr}$ by ${0.0610705}$ of a unit.
Thus,
\begin{equation*}
	\left.
		\dfrac{\rm d(C_{d}A)}{\rm C_{d}A}
	\right|_{{\rm C_{d}A}=0.2607}
	=
	\dfrac{1}{0.2607}
	\,,	
\end{equation*}
which is an increase of about $384\%$\,, corresponds to
\begin{equation*}
	\left.
		\dfrac{\rm d(C_{rr})}{\rm C_{rr}}
	\right|_{{\rm C_{rr}}=0.00231}
	=
	-\dfrac{0.0610705}{0.00231}
	\,,	
\end{equation*}
which is a decrease of about $2644\%$\,.

For the inclined course\,---\,in the neighbourhood of $\overline V_{\!\rightarrow}=4.138$ and $\overline P=286.5783$\,, wherein ${\rm C_{d}A}=0.2702$ and ${\rm C_{rr}}=0.01298$\,---\,$\partial_{\rm C_{rr}}{\rm C_{d}A}=-108.754$ and its reciprocal is $\partial_{\rm C_{d}A}{\rm C_{rr}}=-0.0091951$\,.
Following the same method as for the flat course, we see that an increase of $\rm C_{d}A$ by about $1/0.2702=370\%$ corresponds to a decrease of $\rm C_{rr}$ by about $0.0091951/0.01298=71\%$\,.

Remaining within a linear approximation, an increase of $\rm C_{d}A$ by $1\%$ corresponds to a decrease of $\rm C_{rr}$ by $6.89\%$\,, for the flat course, and a decrease of only $0.19\%$\,, for the inclined course.
This result quantifies that the dependence between $\rm C_{d}A$ and $\rm C_{rr}$\,, within adjustments of the model, is more pronounced for the flat course than for the inclined course, as expected in view of expression~(\ref{eq:P}), whose value---for the inclined course---is dominated by the first summand in the numerator, which includes neither $\rm C_{d}A$ nor $\rm C_{rr}$\,.
This result provides a quantitative justification for the observation that the dependance of the accuracy of the estimate of power on the accuracies of $\rm C_{d}A$ and $\rm C_{rr}$ varies depending on the context; it is more pronounced on flat and fast courses.

Similar evaluations can be performed using the values of derivatives contained in the second row of Table~\ref{table:ModelRates}.
Therein, an increase in $\lambda$ results in a decrease of $\rm C_{d}A$\,, with different rates, for the flat and inclined courses.
\begin{table}
{\small
\begin{center}
{\tabcolsep4pt
{\setlength{\tabcolsep}{15pt}
\renewcommand{\arraystretch}{2.5}
\begin{tabular}{|c||c|c|} \hline
Partial derivative & Flat course & Inclined course \\
\hline\hline
$\partial_{\rm C_{rr}}{\rm C_{d}A}$ & 
$-16.3745\,\pm\,3.05867$ & $-108.754\,\pm\,10.8593$ \\
\hline
$\partial_\lambda{\rm C_{d}A}$ & 
$-0.30959\,\pm\,0.0928393$ & $-6.88438\,\pm\,12.4687$\\
\hline
\end{tabular}}}
\end{center}
\caption{\small Model rates of change following formula~(\ref{eq:Thm}) and values in Table~\ref{table:Partials}}
\label{table:ModelRates}}
\end{table}
\subsection{Physical considerations}
Physical inferences\,---\,based on minimization of expression~(\ref{eq:model})\,---\,are accurate in a neighbourhood of $\overline V_{\!\rightarrow}$ and $\overline P$\,, wherein the set of values for ${\rm C_{d}A}$\,, ${\rm C_{rr}}$ and $\lambda$ is estimated, since, as discussed in Section~\ref{sec:MC}, these values\,---\,in spite of their distinct physical interpretations\,---\,are related among each other by the process of optimization of the model.

In view of expression~(\ref{eq:P}), and as illustrated in Figure~\ref{fig:FigPowerSpeed}, power as a function of ground speed is a cubic.
The inflection point of the curve corresponds to the speed for which there is no air resistance, since the ground speed is equal to the tailwind,~$V_{\!\rightarrow}=-w_{\leftarrow}$\,.
At that point, $P$ is the power to overcome the rolling and drivetrain resistance, only.
To the left of that point, the empirical adequacy of expression~(\ref{eq:P}) is questionable.
However, for the results presented in this article, we consider the cases of~$w\gg -V_{\!\rightarrow}$\,, which are well to the right of the inflection point.
\begin{figure}
	\centering
	\includegraphics[scale=0.7]{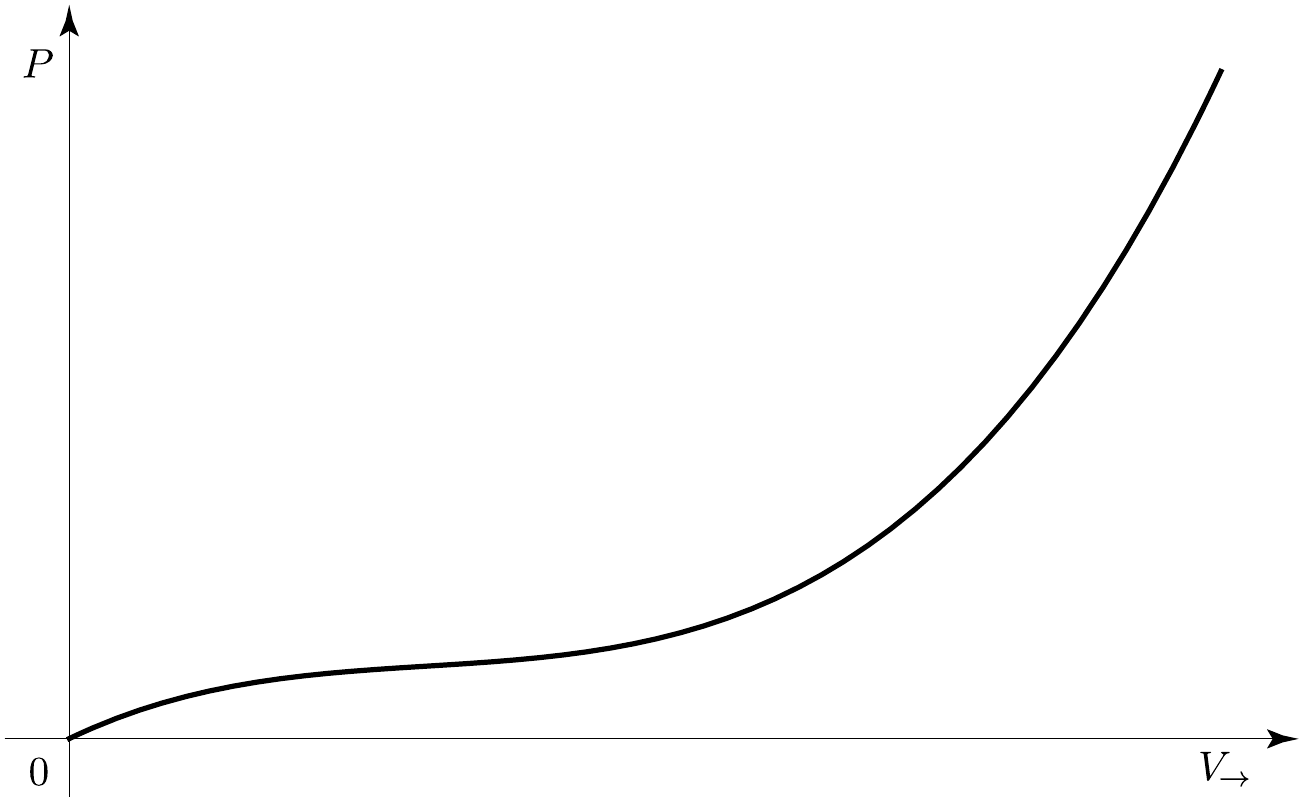}
	\caption{\small Power as a function of speed {\it ceteris paribus}}
	\label{fig:FigPowerSpeed}
\end{figure}

The values in Table~\ref{table:PhysicalRates} are the changes of ground speed due to a change in power, mass, slope and wind; in each case, the other quantities are kept constant.
These values allow us to answer such questions as what increase of speed would result from an increase of power by $1$~watt?
To answer this question, let us examine the first row.

For the flat course\,---\,considering the neighbourhood of $\overline V_{\!\rightarrow}=10.51$ and $\overline P=258.8$\,---\,$\partial_{P}V_{\!\rightarrow}=0.0176846$ and, in accordance with expression~(\ref{eq:Thm}), its reciprocal is $\partial_{V_{\!\rightarrow}}P=56.5462$\,.
We write the corresponding differential as
\begin{equation*}
	{\rm d}P
	=
	\dfrac{\partial P}{\partial V_{\!\rightarrow}}\,
	{\rm d}V_{\!\rightarrow}
	=
	{56.5462}\,{\rm d}V_{\!\rightarrow}
	\,;
\end{equation*}
in other words, an increase of $V_{\!\rightarrow}$ by a unit requires an increase of $P$ by $56.5462$ units.
This means that an increase of speed of $1$~metre per second requires an increase of power of $56.5462$~watts.%
\footnote{$\partial P/\partial V$ can be also found by differentiating expression~(\ref{eq:P}) with respect to $V_{\!\rightarrow}$\,.
However, the implicit function theorem allows us to obtain relations between quantities, without explicitly expressing one in terms of the other.
Also, in accordance with the inverse function theorem, $\partial V/\partial P=1/(\partial P/\partial V)$\,, which is justified by the fact that, for expression~(\ref{eq:P}), $\partial P/\partial V_{\!\rightarrow}\neq 0$\,, in the neighbourhood of interest, as required by the theorem.
However, expression~(\ref{eq:Thm}), which states the implicit function theorem, provides a convenience of examining the relations between the rates of change of any two quantities without invoking the inverse function theorem and requiring an explicit expression for either of them.}

For the inclined course\,---\,considering the neighbourhood of $\overline V_{\!\rightarrow}=4.138$ and $\overline P=286.6$\,---\,$\partial_{P}V_{\!\rightarrow}=0.0134356$ and, in accordance with expression~(\ref{eq:Thm}), its reciprocal is $\partial_{V_{\!\rightarrow}}P=74.4293$\,.
Thus, an increase of speed of about $1$~metre per second requires an increase of power of about $74.4293$~watts.

Hence, for the flat course,
\begin{equation}
	\label{eq:dVV}
	\left.
		\dfrac{{\rm d}V_{\!\rightarrow}}{V_{\!\rightarrow}}
	\right|_{\overline V_{\!\rightarrow}=10.51}
	=
	\dfrac{1}{10.51}
	\,,
\end{equation}
 which is a increase in speed of about $9.5\%$\,, requires
\begin{equation*}
	\left.
		\dfrac{{\rm d}P}{P}
	\right|_{\overline P=258.8}
	=
	\dfrac{56.5462}{258.8}
	\,,
\end{equation*}
which is an increase in power of about $22\%$\,.
For the inclined course, ${{\rm d}V/V}=1/{4.138}$ and ${\rm d}P/P={74.4293}/286.6$\,, which means that a $24\%$ increase in speed requires about $26\%$ increase in power.

Remaining within a linear approximation, an increase of speed by $1\%$ requires an increase of power by about $2.3\%$\,, for the flat course, and an increase of only about $1.1\%$\,, for the inclined course.
This result provides a quantitative justification for a time-trial adage of pushing on the uphills and recovering on the flats, to diminish the overall time.

Since, as illustrated in Figure~\ref{fig:FigPowerSpeed}, the slope of the tangent line changes along the curve, the value of expression~(\ref{eq:Thm}) corresponds to a given neighbourhood of pairs, $\overline V_{\!\rightarrow}$ and $\overline P$\,.
Our interpretation is tantamount to comparing the slopes of two such curves\,---\,one corresponding  to the model of the flat course and the other of the inclined course\,---\,at two distinct locations, $(\overline V_{\!\rightarrow},\overline P)=(10.51,258.8)$ and $(\overline V_{\!\rightarrow},\overline P)=(4.138,286.6)$\,.
Even though the slope of the tangent line changes, it is positive for all values.
This means that the function is monotonically increasing, even though it is a third degree polynomial.
In other words, the relation of power and speed is a bijection, as illustrated in Figure~\ref{fig:FigPowerSpeed} and as discussed in Appendix~\ref{sec:One-to-one}.

Let us examine the second row of Table~\ref{table:PhysicalRates}.
For the flat course, $\partial_{m}V_{\!\rightarrow}=-0.00436803$ and its reciprocal is $\partial_{V_{\!\rightarrow}}m=-228.936$\,.
This means that an increase of speed by $1$~metre per second\,---\,due only to the loss of mass\,---\,requires a decrease of mass of about $229$~kilograms.
For the inclined course, $\partial_{m}V_{\!\rightarrow}=-0.0330937$\,, and its reciprocal is $\partial_{V_{\!\rightarrow}}m=-30.2172$\,; thus, an increase of speed by $1$~metre per second requires a decrease of mass of about $30$~kilograms.

Thus, for the flat course, in accordance with expression~(\ref{eq:dVV}), an increase in speed by about $9.5\%$ requires ${\rm d}m/m=228.936/111$\,, which is a decrease of mass of about $206\%$\,.
For the inclined course, ${{\rm d}V/V}=1/4.138$\,; hence, an increase in speed by about $24\%$ requires ${\rm d}m/m=30.2172/111$\,,  which is a decrease of  mass of about $27\%$\,.
Remaining within a linear approximation, for the flat course, an increase of speed by $1\%$ requires a decrease of mass of about $22\%$\,, and for the inclined course, it requires a decrease of mass of only about $1\%$\,.

This is supportive evidence of an empirical insight into the importance of lightness for climbing; in contrast to flat courses, in the hills, even a small loss of weight results in a noticeable advantage.
Also, this result can be used to quantify the importance of the power-to-weight ratio, which plays an important role in climbing, but a lesser one on a flat.

Similar evaluations can be performed using the values of derivatives contained in the third and fourth rows of Table~\ref{table:PhysicalRates}.
In both cases, the sign is negative; hence, as expected, the increase of steepness or headwind results in a decrease of speed.
These rates of decrease, which are different for the flat and inclined courses, can be quantified in a manner analogous to the one presented in this section.
\begin{table}
{\small
\begin{center}
{\tabcolsep4pt
{\setlength{\tabcolsep}{15pt}
\renewcommand{\arraystretch}{2.5}
\begin{tabular}{|c||c|c|} \hline
Partial derivative & Flat course & Inclined course \\
\hline\hline
$\partial_{P}V_{\!\rightarrow}$ & 
$0.0176846\,\pm\,0.00369688$ & $0.0134356\,\pm\,0.0224816$\\
\hline
$\partial_{m}V_{\!\rightarrow}$ & 
$-0.00436803\,\pm\,0.0098318$ & $-0.0330937\,\pm\,0.00711951$\\
\hline
$\partial_{\theta}V_{\!\rightarrow}$ & 
$-209.893\,\pm\,29.0382$ & $-62.2954\,\pm\,104.671$\\
\hline
$\partial_{w_{\leftarrow}}V_{\!\rightarrow}$ & 
$-0.635912\,\pm\,0.0693924$ & $-0.0748508\,\pm\,0.125421$\\
\hline
\end{tabular}}}
\end{center}
\caption{\small Physical rates of change following formula~(\ref{eq:Thm}) and values in Table~\ref{table:Partials}}
\label{table:PhysicalRates}}
\end{table}
\section{Discussion and conclusions}
The results derived herein are sources of information for optimizing the performance in a time trial under a variety of conditions, such as the strategy of the distribution of effort over the hilly and flat portions or headwind and tailwind sections.
For instance, examining $\partial_{w_{\leftarrow}}V_{\!\rightarrow}$\,, for a flat course, we could quantify another time-trial adage of pushing against the headwind and recovering with the tailwind, to diminish the overall time, under a constraint of cyclist's capacity; such a conclusion is illustrated in Appendix~\ref{sec:Appendix}.
A further insight into this statement is provided by the following example.

Let us consider a five-kilometre section against the headwind,~$w_{\leftarrow}=5$\,, and, following a turnaround, the same five-kilometre section with the tailwind, $w_{\leftarrow}=-5$\,.
If we keep a constant power,~$P=258.8$\,, and use equation~(\ref{eq:P}) to find the corresponding speed, we achieve the total time of $946$~seconds, for ten kilometres, with the upwind speed of $V_{\!\rightarrow}=8.27286$ and the downwind speed of $V_{\!\rightarrow}=14.6269$\,.
If we maintain the same average power, over ten kilometres, but increase the power on the upwind section by $10\%$ and decrease the power on the downwind section by $10\%$\,, we reduce the total time by about $16$~seconds, with the upwind speed of $V_{\!\rightarrow}=8.64701$ and the downwind speed of $V_{\!\rightarrow}=13.7839$\,.
For reliable results\,---\,in view of Figure~\ref{fig:FigPowerSpeed} and the linear approximation within a neighbourhood of the average speed for which the flat-course model is established\,---\,one should not consider excessive increases or decreases of speed or power.
To conclude this example, let us consider the case of keeping a constant speed,~$V_{\!\rightarrow}=11.4256$\,, which is the average for the latter scenario, for ten kilometres.
Such a strategy requires the power for  the upwind section to be $P={531.557}\gg 258.8$\,.
Thus, even though we should push harder against the wind than with the wind, we should not try to keep the same speed for both the upwind and downwind sections.
This conclusion is consistent with the partial-derivative values of Table~\ref{table:PhysicalRates}.

This conclusion is\,---\,only in part\,---\,consistent with a ``Rule of Thumb'' of \citet{Anton2013}.
\begin{quote}
	First, recognize that the equal power outputs recipe, which would have you maintain the same pedal cadence and heart rate in headwind or tailwind, may feel optimal, but it actually isn't.
	In fact, it is only barely faster than suffering the punishing swing in power-output that would be required to maintain equal out-and-back speeds.
	Your overall speed (and your finishing position, of course) will benefit from expending some extra energy when the wind is in your face and conserving some energy when the wind is at your back, but {\it not too much}, because going too far slows you down again as you approach the equal-speeds scenario.	
\end{quote}
A quantification of this, and another, rule of thumb of \citet{Anton2013} is presented in Appendix~\ref{sec:Appendix}, where we question their generality.

Also, results derived in this paper allow for a quantitative evaluation of the aerodynamic efficiency and\,---\,for team time trials\,---\,of the efficiency of drafting.
Under various conditions, there are different relations between the rates of change of quantities in question.
In this paper, as a consequence of the implicit function theorem, relations between the rates of change of all quantities that are included in a model are explicitly stated, and each relation can be evaluated for given conditions.

Furthermore, the derived expressions allow us to interpret the obtained measurements in a quantitative manner, since the values of these expressions entail concrete issues to be addressed for a given bicycle course.
The reliability of information\,---\,which depends on the accuracy of measurements and the empirical adequacy of a model\,---\,is quantified by a misfit and by standard deviations of model parameters.
Also, using partial derivatives listed in Section~\ref{sub:Partials}, we can write the differential of $P$\,, and, hence, estimate its error inherited from the errors of other quantities,
\begin{equation*}
	{\rm d}P
	=
	\partial_{m}P\,{\rm d}m
	+\,\cdots\,+
	\partial_{\lambda}P\,{\rm d}\lambda\,,
\end{equation*}
where, in accordance with equation~(\ref{eq:Thm}) and in view of $\partial_{P}f=1$\,, $\partial_{m}P=-\partial_{m}f,\,\ldots\,,\partial_{\lambda}P=-\partial_{\lambda}f$\,.
\begin{appendix}
\section{Time minimization with Lagrange multipliers}
\label{sec:Appendix}
\setcounter{equation}{0}
\setcounter{figure}{0}
\renewcommand{\theequation}{\Alph{section}.\arabic{equation}}
\renewcommand{\thefigure}{\Alph{section}\arabic{figure}}
\subsection{Preliminary remarks}
Consider a flat course of length~$d$\,, whose one half is covered against the wind, as illustrated in Figure~\ref{fig:FigLagrange}, and the other half with the wind.
\begin{figure}
\centering
\includegraphics[scale=0.75]{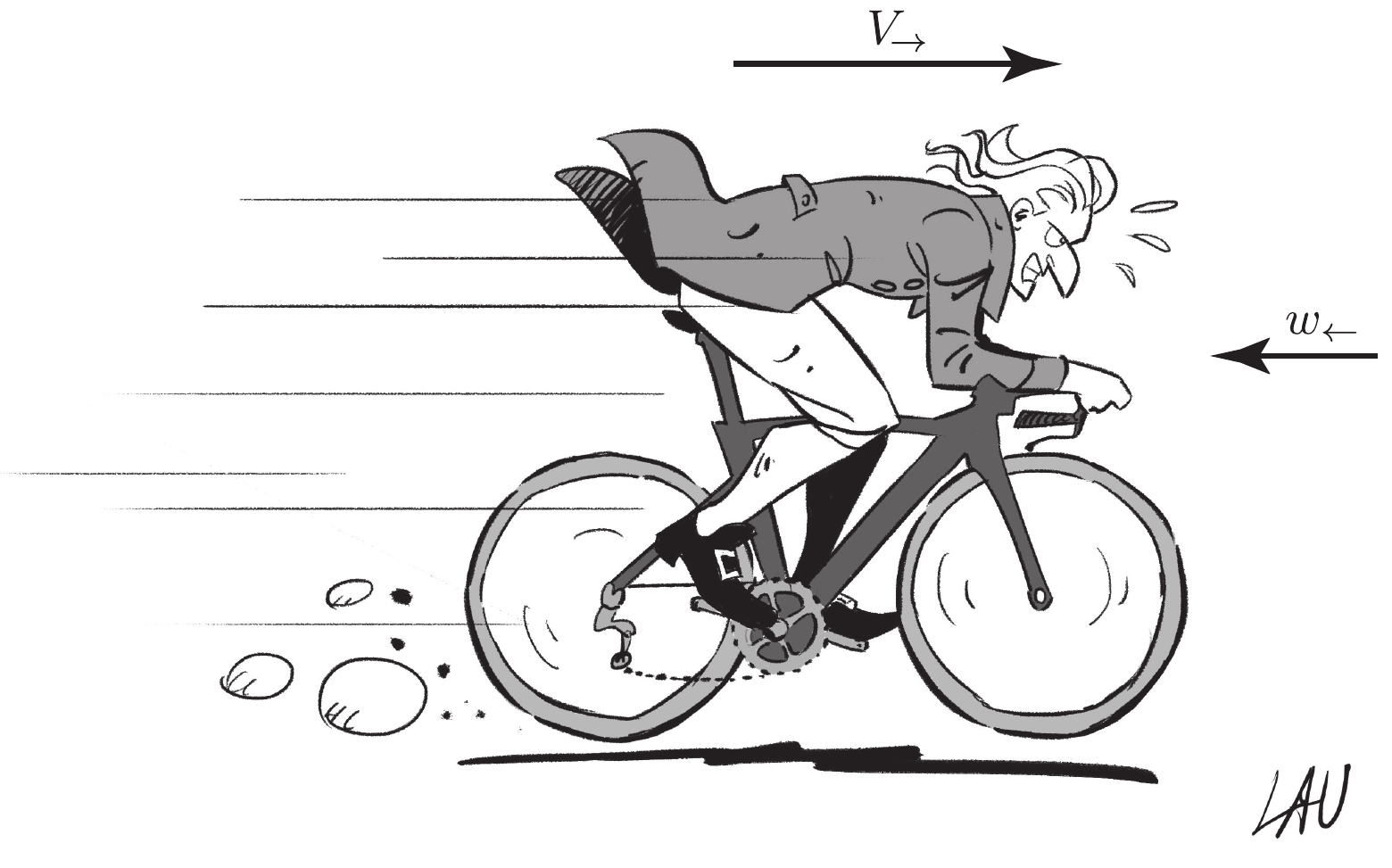}
\caption{\small Joseph-Louis (Giuseppe Luigi) Lagrange examining his optimizations based on the fact that, in general, in accordance with expression~(\ref{eq:P}), headwinds,~$w_\leftarrow>0$\,, increase the air resistance, and tailwinds,~$w_\leftarrow<0$\,, decrease it, provided that $w_{\leftarrow}\nless-V_{\!\rightarrow}$\,.}
\label{fig:FigLagrange}
\end{figure}
To minimize the time,~$t$\,, we need to maximize the average speed,
\begin{equation}
	\label{eq:AveV}
	\overline V_{\!\rightarrow}
	=
	\dfrac{d}{t}
	=
	\dfrac{d}{\dfrac{d}{2\,V_{U}}+\dfrac{d}{2\,V_{D}}}
	=
	\dfrac{2\,V_{U}V_{D}}{V_{U}+V_{D}}
	\,,
\end{equation}
where $V_{U}$ and $V_{D}$ are the speeds on the upwind and downwind sections, respectively.
The maximum of this function occurs for all values along $V_{U}=V_{D}$\,.
To get a pair of values that corresponds to a realistic scenario, we invoke the method of Lagrange multipliers and find the maximum of speed~\eqref{eq:AveV}, subject to constraints.
To do so, we state the problem as a Lagrangian function of two variables with $n$ constraints, 
\begin{equation}
	\label{eq:L_general}
	L(V_{U},V_{D})
	=
	\overline V_{\!\!\rightarrow}(V_{U},V_{D})
	+
	\Lambda_{1}\,\Gamma_{1}(V_{U},V_{D})
	+
	\dots
	+
	\Lambda_{n}\,\Gamma_{n}(V_{U},V_{D})
	\,,	
\end{equation}
where $\Lambda_i$\,, with $i=1,\ldots,n$\,, is a Lagrange multiplier.
The optimization is achieved at the stationary points of function~\eqref{eq:L_general}, which we find by solving the system of equations,
\begin{equation}
	\label{eq:LagPar}
	\dfrac{\partial L}{\partial V_{U}}=0
	\,,\quad
	\dfrac{\partial L}{\partial V_{D}}=0
	\,,\quad
	\dfrac{\partial L}{\partial\Lambda_{1}}= 0
	\,,\quad\ldots\,,\quad
	\dfrac{\partial L}{\partial\Lambda_{n}}=0
	\,,
\end{equation}
whose solution is the pair,~$V_{U},V_{D}$\,, that extremizes expression~(\ref{eq:AveV}) and satisfies the constraints,~$\Gamma_i$\,, where $i=1,\ldots,n$\,, within the physical realm.
\subsection{Constraint of total work}
\label{app:ConstraintWork}
Let us impose a constraint in terms of the amount of total work, $W_{0}=W_{U}+W_{D}$\,, to be done by a cyclist on the upwind and downwind sections, whose proportions of length are stated in expression~(\ref{eq:AveV}),
\begin{equation}
	\label{eq:Gamma_{W}}
	\Gamma_{W}
	=
	\overbrace{
		\underbrace{
			\frac{
				{\rm C_{rr}}\,m\,g
				+
				\tfrac{1}{2}\,{\rm C_{d}A}\,\overline\rho
				\left(V_{U}+w_{\leftarrow}\right)^{2}
			}{
				1-\lambda
			}
		}_{F_{\!\leftarrow}}
		\dfrac{d}{2}
	}^{W_{U}}
	+
	\overbrace{
		\underbrace{
			\frac{
				{\rm C_{rr}}\,m\,g
				+
				\tfrac{1}{2}\,{\rm C_{d}A}\,\overline\rho
				\left(V_{D}-w_{\leftarrow}\right)^{2}
			}{
				1-\lambda
			}
		}_{F_{\!\leftarrow}}
		\dfrac{d}{2}
	}^{W_{D}}-W_{0}
	=
	0
	\,.
\end{equation}
Herein, we assume
\begin{equation*}
	W_{0}
	=
	\underbrace{
		\frac{
			{\rm C_{rr}}\,m\,g
			+
			\tfrac{1}{2}\,{\rm C_{d}A}\,\overline\rho\,{\overline V_{\!\rightarrow}}^{2}
		}{
			1-\lambda
		}
	}_{F_{\!\leftarrow}}
	d
\end{equation*}
to be the total amount of energy available to the cyclist, which corresponds to the work done on the same course, with a maximum effort\,---\,with no wind,~$w_{\leftarrow}=0$\,---\,resulting in a given value of $\overline V_{\!\rightarrow}$\,.
We write function~\eqref{eq:L_general} as
\begin{equation}
	\label{eq:L_{W}}
	L_{W}
	=
	\overline V_{\!\rightarrow}
	+
	\Lambda_{W}\,\Gamma_{W}
	\,.
\end{equation}
Considering $d=10000$\,, model parameters stated in Section~\ref{sec:FlatCourseVals}, namely, $m=111$\,, $g=9.81$\,, $\overline\rho=1.20406$\,, ${\rm C_{d}A}=0.2607$\,, ${\rm C_{rr}}=0.00231$\,, $\lambda=0.03574$\,, and letting $\overline V_{\!\rightarrow}=10.51$\,, we obtain $W_{0}=205878$\,.
To minimize the traveltime with $w_{\leftarrow}=5$\,, we write system~(\ref{eq:LagPar}), in terms of function~(\ref{eq:L_{W}}),
\begin{equation}
	\label{eq:L_{W}_system}
	\begin{dcases}
		\dfrac{\partial L_{W}}{\partial V_{U}}
		=
		\dfrac{2\,{V_{D}}^2}{\left(V_{D} + V_{U}\right)^{2}} 
		+ 
		\Lambda_{W}
		\left(1627.66\,V_{U}+8138.32\right)
		=
		0
		\,,
		\\
		\dfrac{\partial L_{W}}{\partial V_{D}}
		=
		\dfrac{2\,{V_{U}}^2}{\left(V_{D} + V_{U}\right)^{2}}
		+
		\Lambda_{W}
		\left(1627.66\,V_{D}-8138.32\right)
		=
		0
		\,,
		\\
		\dfrac{\partial L_{W}}{\partial\Lambda_{W}}
		= 
		813.832\left({V_{U}}^2 + {V_{D}}^2\right)
		+
		8138.32\left(V_{U}-V_{D}\right)
		-
		139100
		=
		0
		\,.
	\end{dcases}
\end{equation}
Solving system~\eqref{eq:L_{W}_system} numerically, we obtain a single physical solution,
\begin{equation}
	\label{eq:L_{W}_system_sol}
	V_{U} = 8.27945
	\quad{\rm and}\quad
	V_{D} = 11.6766
\end{equation}
which is the pair that both maximizes expression~(\ref{eq:AveV}) and satisfies constraint~(\ref{eq:Gamma_{W}}).

In accordance with expression~(\ref{eq:AveV}), the average speed is $\overline V_{\!\rightarrow} = 9.68886$\,, which is lower than the speed under the assumption of $w_{\leftarrow}=0$\,, namely, $\overline V_{\!\rightarrow}=10.51$\,.
This quantifies an adage that riding with the wind does not compensate for the speed lost by riding against the wind.
The loss is due to the dissipation of energy due to the air, rolling and drivetrain resistances, which are present on both the upwind and downwind sections.
\subsection{Constraint of average power}
\label{app:ConstraintPower}
Let us impose a constraint in terms of the value of average power, $P_0$\,, maintained by a cyclist on the upwind and downwind sections. 
In contrast to work, power is not a cumulative quantity.
Hence, the distance does not appear explicitly in a constraint, and we require constraints for both the upwind and downwind sections,
\begin{align}
	\label{eq:Gamma_{P_{U}}}
	\Gamma_{P_{U}}
	&=
	\overbrace{
		\underbrace{
			\frac{
				{\rm C_{rr}}\,m\,g
				+
				\tfrac{1}{2}\,{\rm C_{d}A}\,\overline\rho
				\left(V_{U}+w_{\leftarrow}\right)^{2}
			}{
				1-\lambda
			}
		}_{F_{\!\leftarrow}}
		V_{U}
	}^{P_{U}}
	-
	P_{0}
	\,,
	\\
	\label{eq:Gamma_{P_{D}}}
	\Gamma_{P_{D}}
	&=
	\overbrace{
		\underbrace{
			\frac{
				{\rm C_{rr}}\,m\,g
				+
				\tfrac{1}{2}\,{\rm C_{d}A}\,\overline\rho
				\left(V_{D}-w_{\leftarrow}\right)^{2}
			}{
				1-\lambda
			}
		}_{F_{\!\leftarrow}}
		V_{D}
	}^{P_{D}}
	-
	P_{0}
	\,.
\end{align}
Herein, we assume
\begin{equation*}
	P_{0}
	=
	\underbrace{
		\frac{
			{\rm C_{rr}}\,m\,g
			+
			\tfrac{1}{2}\,{\rm C_{d}A}\,\overline\rho\,{\overline V_{\rightarrow}}^{2}
		}{
			1-\lambda
		}
	}_{F_{\!\leftarrow}}
	\overline V_{\rightarrow}
\end{equation*}
to be the average power available to the cyclist, which corresponds to the average power achieved on the same course, with a maximum effort\,---\,with no wind,~$w_{\leftarrow}=0$\,---\,resulting in a given value of $\overline V_{\!\rightarrow}$\,.
Function~\eqref{eq:L_general} is
\begin{equation}
	\label{eq:L_{P}}
	L_{P}
	=
	\overline V_{\!\rightarrow}
	+
	\Lambda_{P_{U}}\,\Gamma_{P_{U}}
	+
	\Lambda_{P_{D}}\,\Gamma_{P_{D}}
	\,.
\end{equation}
For $m=111$\,, $g=9.81$\,, $\overline\rho=1.20406$\,, ${\rm C_{d}A}=0.2607$\,, ${\rm C_{rr}}=0.00231$\,, $\lambda=0.03574$\,, $\overline V_{\!\rightarrow}=10.51$\,, we obtain $P_{0} = 216.378$\,.
To minimize the traveltime with $w_{\leftarrow}=5$\,, we write system~(\ref{eq:LagPar}), in terms of function~(\ref{eq:L_{P}}),
\begin{equation*}
	\begin{dcases}
		\dfrac{\partial L_{P}}{\partial V_{U}}
		=
		\dfrac{2\,{V_{D}}^2}{\left(V_{D} + V_{U}\right)^{2}} 
		+ 
		\Lambda_{P_{U}}
		\left(0.488299\,{V_{U}}^2+3.25533\,V_{U}+6.67778\right)
		=
		0
		\,,
		\\
		\dfrac{\partial L_{P}}{\partial V_{D}}
		=
		\dfrac{2\,{V_{U}}^2}{\left(V_{D} + V_{U}\right)^{2}}
		+
		\Lambda_{P_{D}}
		\left(0.488299\,{V_{D}}^2-3.25533\,V_{D}+6.67778\right)
		=
		0
		\,,
		\\
		\dfrac{\partial L_{P}}{\partial\Lambda_{P_{U}}}
		=
		0.162766\,{V_{U}}^{3} 
		+ 
		1.62766\,{V_{U}}^{2} 
		+ 
		6.67778\,V_{U}
		-
		216.378
		=
		0
		\,,
		\\
		\dfrac{\partial L_{P}}{\partial\Lambda_{P_{D}}}
		= 
		0.162766\,{V_{U}}^{3} 
		- 
		1.62766\,{V_{U}}^{2} 
		+ 
		6.67778\,V_{U} 
		-
		216.378
		=
		0
		\,.
	\end{dcases}
\end{equation*}
The single physical solution is
\begin{equation}
	\label{eq:L_{P}_system_sol}
	V_{U} = 7.60272
	\quad{\rm and}\quad
	V_{D} = 13.9163
	\,,
\end{equation}
which both maximizes expression~\eqref{eq:AveV} and satisfies constraints~\eqref{eq:Gamma_{P_{U}}} and~\eqref{eq:Gamma_{P_{D}}}.
The corresponding average is $\overline V_{\!\!\rightarrow} = 9.83332$\,, which confirms an adage that riding with the wind does not compensate for the speed lost by riding against the wind.
\subsection{Relation between differences and derivatives}
To conclude this appendix, let us comment on difference~$\Delta V_{\!\rightarrow}/\Delta w_{\leftarrow}$\,, discussed herein, in the context of $\partial_{w_{\leftarrow}}V_{\!\rightarrow}$\,, whose value is presented in Table~\ref{table:PhysicalRates}.
Partial derivatives correspond to a tangent to a curve at a point, and the differences to a secant over a segment of the curve.
Also, partial derivatives are obtained under the assumption that all other quantities are constant.

The latter requirement is satisfied in Appendix~\ref{app:ConstraintPower}, where
\begin{equation*}
	\frac{\Delta V_{\rightarrow}}{\Delta w_{\leftarrow}}
	=
	\frac{V_{U}-V_{D}}{w_{\leftarrow}-(-w_{\leftarrow})}
	=
	\frac{7.60272-13.9163}{10}
	=
	-0.631356\,,
\end{equation*}
which agrees with $\partial_{w_{\leftarrow}}V_{\rightarrow}$\,, in Table~\ref{table:PhysicalRates}, to two decimal points.
For $w_{\leftarrow}=0.05$\,, we obtain $V_{U} = 10.4782$ and $V_{D} = 10.5418$\,; hence, $\Delta V_{\rightarrow}/\Delta w_{\leftarrow}=-0.635911$\,, which agrees with $\partial_{w_{\leftarrow}}V_{\rightarrow}$ to six decimal points.
In general,
\begin{equation*}
	\lim_{\Delta w_{\leftarrow}\rightarrow0}
	\frac{\Delta V_{\rightarrow}}{\Delta w_{\leftarrow}}
	=
	\frac{\partial V_{\rightarrow}}{\partial w_{\leftarrow}}
	\,,
\end{equation*}
as expected, in view of a secant approaching a tangent.

The requirement of constant quantities is not satisfied in Appendix~\ref{app:ConstraintWork}, since $P$ is allowed to vary to maintain the imposed value of~$W_{0}$\,.
In Appendix~\ref{app:ConstraintPower}, $W$ varies to maintain the imposed value of~$P_0$\,, but $W$ is not a variable in function~(\ref{eq:f}), used in partial derivatives.

As shown in this appendix, properties of partial derivatives need to be considered in examining time-trial strategies.
In contrast to common optimization methods, partial derivatives correspond to a change of a single variable, only.
\subsection{Closing remarks}
Let us examine the constraints discussed in this appendix in terms of required powers.
For the work constraint, the average speed is~$\overline V_{\!\rightarrow}= 9.68886$\,.
Following expression~(\ref{eq:P}), the required powers are $P_{U}=365.537$ and $P_{D}=59.9459$\,, for the upwind and downwind sections, respectively.
For windless conditions, we have $\overline{P} =173.316$\,.
Thus, $P_{U}$ is significantly greater than~$\overline P$\,.

For the power constraint, with $P_{0}=216.378$\,, the average speed is~$\overline V_{\!\rightarrow}= 9.83332$\,.
Since the average speed is greater than for the work-constraint optimization and the average power does not exceed the value obtained in windless conditions, this appears to be the preferable strategy.
Also, power is provided as an instantaneous quantity by the power meters, which allows the cyclist to follow a given strategy, whose further refinements are to be considered in future studies.

To close, let us consider a rule of thumb of~\citet{Anton2013}.
\begin{quote}
	Choose a target-speed~$v_0$\,.
	$\,[\ldots]\,$
	Endeavor to ride at $v\cong v_0+w/4$ when the wind is at your back and at
	$v\cong v_0-w/2$ when the wind is at your face.	
\end{quote}
If we choose $\overline V_{\!\rightarrow}=10.51=:v_0$ to be a target speed, with $w=5$\,, speeds~\eqref{eq:L_{P}_system_sol}, which result from the power constraint, are less congruent with this rule than speeds~\eqref{eq:L_{W}_system_sol}, which result from the work constraint, yet---according to the present analysis---speeds~\eqref{eq:L_{P}_system_sol} appear to be preferable.
This is an indication of further subtleties that need to be considered in developing a time-trial strategy.

\section{Relation between power and speed}
\label{sec:One-to-one}
\setcounter{equation}{0}
\setcounter{figure}{0}
\renewcommand{\theequation}{\Alph{section}.\arabic{equation}}
\renewcommand{\thefigure}{\Alph{section}\arabic{figure}}
\subsection{Proposition}
\label{sub:Proposition}
\begin{proposition}
\label{prop:one-to-one}
According to model~(\ref{eq:P}), with $a=0$ and $\eta=1$\,, the relation between the generated power,~$P$\,, and the bicycle speed,~$V_{\!\rightarrow}$\,, is one-to-one.	
\end{proposition}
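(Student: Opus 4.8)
The plan is to show that, under the stated assumptions $a=0$ and $\eta=1$, the function $P(V_{\!\rightarrow})$ obtained from expression~(\ref{eq:P}) is strictly increasing on the physically admissible domain, which implies injectivity, and hence that the correspondence $P\leftrightarrow V_{\!\rightarrow}$ is one-to-one. First I would fix all quantities other than $V_{\!\rightarrow}$ and $P$ — that is, treat $m$, $g$, $\theta$, $\rm C_{rr}$, $\rm C_{d}A$, $\rho$, $w_{\leftarrow}$ and $\lambda$ as constants — so that expression~(\ref{eq:P}) with $a=0$ becomes the cubic polynomial
\begin{equation*}
	P(V_{\!\rightarrow})
	=
	\frac{
		mg\left({\rm C_{rr}}\cos\theta+\sin\theta\right)V_{\!\rightarrow}
		+
		\tfrac{1}{2}\,{\rm C_{d}A}\,\rho\left(V_{\!\rightarrow}+w_{\leftarrow}\right)^{2}V_{\!\rightarrow}
	}{1-\lambda}
	\,.
\end{equation*}

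The key step is to differentiate and show positivity of the derivative. Using the expression for $\partial_{V_{\!\rightarrow}}f$ already computed in Section~\ref{sub:Partials}, together with $\partial_P f = 1$ and formula~(\ref{eq:Thm}), one has $\partial_{V_{\!\rightarrow}}P = -\partial_{V_{\!\rightarrow}}f$, so with $a=0$ and $\eta=1$,
\begin{equation*}
	\frac{\partial P}{\partial V_{\!\rightarrow}}
	=
	\frac{
		\rho\,{\rm C_{d}A}\left(V_{\!\rightarrow}+w_{\leftarrow}\right)\left(3\,V_{\!\rightarrow}+w_{\leftarrow}\right)
		+
		2\,m\,g\left({\rm C_{rr}}\cos\theta+\sin\theta\right)
	}{
		2\,(1-\lambda)
	}
	\,.
\end{equation*}
I would then argue that $1-\lambda>0$ by mechanical considerations, that the term $mg({\rm C_{rr}}\cos\theta+\sin\theta)$ is nonnegative in the regime of interest (flats and climbs, where $\sin\theta\geq 0$ dominates any small negative contribution, and on a descent the relevant physical domain is restricted anyway), and — crucially — that on the admissible domain $V_{\!\rightarrow}>-w_{\leftarrow}$, i.e.\ to the right of the inflection point discussed in the Physical considerations section, both factors $V_{\!\rightarrow}+w_{\leftarrow}$ and $3V_{\!\rightarrow}+w_{\leftarrow}$ are positive, so the quadratic-in-$V_{\!\rightarrow}$ term is positive. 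Hence $\partial P/\partial V_{\!\rightarrow}>0$ throughout, so $P$ is strictly monotonic in $V_{\!\rightarrow}$, and therefore injective; since it is also continuous, it is a bijection onto its image, establishing the one-to-one relation.

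The main obstacle — really the only delicate point — is pinning down the precise domain on which monotonicity holds and justifying that this is the physically meaningful domain. The cubic $P(V_{\!\rightarrow})$ is not globally monotone: the factor $(V_{\!\rightarrow}+w_{\leftarrow})(3V_{\!\rightarrow}+w_{\leftarrow})$ is negative for $V_{\!\rightarrow}$ between $-w_{\leftarrow}$ and $-w_{\leftarrow}/3$, so for a sufficiently strong tailwind there is an interval where $\partial P/\partial V_{\!\rightarrow}$ could in principle change sign unless the rolling/gravity term $2mg({\rm C_{rr}}\cos\theta+\sin\theta)$ is large enough to dominate. I would handle this either by (i) restricting attention, as the paper already does, to $w_{\leftarrow}\nless -V_{\!\rightarrow}$ and in fact $w\gg -V_{\!\rightarrow}$, where $V_{\!\rightarrow}+w_{\leftarrow}>0$ and the argument above goes through cleanly, or (ii) noting that the discriminant of the quadratic $3\rho\,{\rm C_{d}A}\,V_{\!\rightarrow}^2 + 4\rho\,{\rm C_{d}A}\,w_{\leftarrow}V_{\!\rightarrow} + (\rho\,{\rm C_{d}A}\,w_{\leftarrow}^2 + 2mg({\rm C_{rr}}\cos\theta+\sin\theta))$ in the numerator is negative whenever $2mg({\rm C_{rr}}\cos\theta+\sin\theta) > \tfrac{1}{3}\rho\,{\rm C_{d}A}\,w_{\leftarrow}^2$, in which case the derivative is positive for all $V_{\!\rightarrow}$. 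Either route closes the argument; I would state the domain restriction explicitly and then the monotonicity–injectivity implication is immediate.
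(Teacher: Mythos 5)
Your proposal is correct and follows essentially the same route as the paper: compute $\partial P/\partial V_{\!\rightarrow}$, argue it is positive (with $1-\lambda>0$ and ${\rm C_{rr}}\cos\theta+\sin\theta\geq0$), and conclude that $P$ is strictly increasing in $V_{\!\rightarrow}$ and hence the relation is one-to-one. The only substantive difference is that you retain $w_{\leftarrow}$ and explicitly treat the interval between $-w_{\leftarrow}$ and $-w_{\leftarrow}/3$ where the factor $\left(V_{\!\rightarrow}+w_{\leftarrow}\right)\left(3V_{\!\rightarrow}+w_{\leftarrow}\right)$ is negative, whereas the paper's displayed derivative is the $w_{\leftarrow}=0$ specialization (consistent with its numerics), for which positivity is immediate; your domain restriction and discriminant argument strengthen the claim rather than diverge from it.
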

\begin{proof}
It suffices to show that $\partial P/\partial V_{\!\rightarrow}>0$\,, for $V_{\!\rightarrow}\in(0,\infty)$\,.
Since
\begin{equation}
\label{eq:Collorary}
\dfrac{\partial P}{\partial V_{\!\rightarrow}}
=\frac{({\rm C_{rr}}\cos\theta+\sin\theta)\,g\,m+\frac{3}{2}{\rm C_dA}\,\rho\,V_{\!\rightarrow}^{\,2}}{1-\lambda}\,,
\end{equation}
where all quantities are positive and $\lambda\ll 1$\,, it follows that $\partial P/\partial V_{\!\rightarrow}>0$ and, hence, the relation between power and speed is one-to-one.
\end{proof}
This bijection means that the power generated by a cyclist and the bicycle speed are related by an invertible function, which is consistent with unique physical solutions obtained in Appendices~\ref{app:ConstraintWork} and \ref{app:ConstraintPower}.
Also, for expression~(\ref{eq:Collorary}), $\partial^2P/\partial V_{\!\rightarrow}^{\,2}>0$\,; hence\,---\,in contrast to Figure~\ref{fig:FigPowerSpeed}, where $\eta=\pm 1$\,---\,the curve corresponding to $P$ as a function of $V_{\!\rightarrow}$\,, with $\eta=1$\,, is concave up for $V_{\!\rightarrow}\in(0,\infty)$\,.
Let us state two corollaries of Proposition~\ref{prop:one-to-one}.
\medskip
\begin{corollary}
Ceteris paribus, an increase of speed requires an increase of power, and an increase of power results in an increase of speed.
\end{corollary}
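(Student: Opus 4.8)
The plan is to deduce the corollary directly from Proposition~\ref{prop:one-to-one} and its proof, since the corollary is really just a verbal restatement of the monotonicity established in equation~(\ref{eq:Collorary}). First I would recall that the proof of the proposition shows $\partial P/\partial V_{\!\rightarrow}>0$ on $(0,\infty)$, so that, holding all other model quantities fixed (the \emph{ceteris paribus} hypothesis), $P$ is a strictly increasing function of $V_{\!\rightarrow}$; consequently $\mathrm{d}P=(\partial P/\partial V_{\!\rightarrow})\,\mathrm{d}V_{\!\rightarrow}$ has the same sign as $\mathrm{d}V_{\!\rightarrow}$, which is the first assertion: an increase of speed requires an increase of power.

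Next I would handle the converse direction. Because $\partial P/\partial V_{\!\rightarrow}>0$ everywhere on $(0,\infty)$, the function $V_{\!\rightarrow}\mapsto P$ is not merely monotone but has a nowhere-vanishing derivative, so by the inverse function theorem it admits a differentiable inverse $P\mapsto V_{\!\rightarrow}$ with $\partial V_{\!\rightarrow}/\partial P=1/(\partial P/\partial V_{\!\rightarrow})>0$. Hence an increase of power results in an increase of speed. Equivalently, one can simply invoke the one-to-one (bijective) and monotone character asserted by the proposition: a strictly increasing bijection between two intervals has a strictly increasing inverse, so the implication runs in both directions.

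I would then note, for completeness, that the \emph{ceteris paribus} qualifier is exactly what makes formula~(\ref{eq:Thm}) and the displayed partial derivatives in Section~\ref{sub:Partials} applicable, and that the positivity of $\partial_{V_{\!\rightarrow}}P$ in the neighbourhoods of interest is already recorded numerically in Table~\ref{table:Partials}, so the corollary is consistent with the empirical values. No genuine obstacle arises here: the only point requiring a word of care is that ``an increase of power \emph{results in} an increase of speed'' presupposes that speed is being treated as the dependent variable, which is legitimate precisely because $\partial P/\partial V_{\!\rightarrow}\neq 0$ licenses the local inversion. Thus the corollary follows immediately, and I would keep the write-up to two or three sentences invoking Proposition~\ref{prop:one-to-one}, equation~(\ref{eq:Collorary}), and the inverse function theorem.
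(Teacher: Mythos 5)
Your argument is correct and is essentially the paper's own (implicit) reasoning: the corollary is presented as an immediate consequence of Proposition~\ref{prop:one-to-one}, with both directions following from the strict positivity of $\partial P/\partial V_{\!\rightarrow}$ in equation~(\ref{eq:Collorary}) and the resulting invertibility of the monotone map $V_{\!\rightarrow}\mapsto P$. Your additional care about which variable is treated as dependent, justified by the inverse function theorem, matches the paper's footnoted discussion of $\partial V/\partial P=1/(\partial P/\partial V)$ and adds nothing contradictory.
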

As illustrated in Figure~\ref{fig:FigPowerSpeed}, this remains true even for $\eta=-1$\,.
Also,
\medskip
\begin{corollary}
\label{corr:2}
Ceteris paribus, a constant power results in a constant speed, an increase of power results in an instantaneous increase of speed, a decrease of power results in an instantaneous decrease of speed.
\end{corollary}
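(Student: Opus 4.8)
The plan is to deduce the three assertions directly from Proposition~\ref{prop:one-to-one} together with the explicit sign of the derivative in~\eqref{eq:Collorary}. First I would note that Proposition~\ref{prop:one-to-one} already establishes that, with $a=0$ and $\eta=1$, the map $V_{\!\rightarrow}\mapsto P$ determined by model~(\ref{eq:P}) is a strictly monotone bijection from $(0,\infty)$ onto its image, and hence possesses a single-valued inverse $P\mapsto V_{\!\rightarrow}$. The first assertion is then immediate: a fixed value of $P$ is the image, under this inverse, of exactly one value of $V_{\!\rightarrow}$\,, so \emph{ceteris paribus} a constant power corresponds to a constant speed.

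For the two monotonicity statements I would invoke the inverse function theorem. Since~\eqref{eq:Collorary} exhibits $\partial P/\partial V_{\!\rightarrow}>0$ for every $V_{\!\rightarrow}\in(0,\infty)$\,---\,each summand being a product of positive quantities and $0<\lambda\ll 1$\,---\,the inverse map is differentiable and
\[
\dfrac{\partial V_{\!\rightarrow}}{\partial P}=\left(\dfrac{\partial P}{\partial V_{\!\rightarrow}}\right)^{-1}>0\,.
\]
Thus $V_{\!\rightarrow}$ is a strictly increasing function of $P$\,, so an increase of $P$ yields an increase of $V_{\!\rightarrow}$ and a decrease of $P$ yields a decrease of $V_{\!\rightarrow}$\,; this is just the content of the preceding corollary restated for the dependence read in the $P\to V_{\!\rightarrow}$ direction.

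The only delicate point is the word ``instantaneous'', and I would address it by emphasizing that model~(\ref{eq:P}) with $a=0$ is an algebraic relation, not a dynamical one: setting $a=0$ removes the inertial term $m\,a$\,, so $P$ and $V_{\!\rightarrow}$ are tied pointwise, with no differential equation mediating the response, and a change in one is matched at the same instant by the corresponding change in the other. I expect this interpretive step to be the main obstacle\,---\,making clear that the bijection of Proposition~\ref{prop:one-to-one} is to be read as an instantaneous input--output law rather than as the steady state of a relaxation process\,---\,whereas the mathematics reduces to monotonicity of an inverse function and requires no further computation.
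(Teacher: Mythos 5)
Your proposal is correct and follows essentially the same route as the paper, which offers no separate argument for this corollary but presents it as an immediate consequence of the strict monotonicity $\partial P/\partial V_{\!\rightarrow}>0$ established in Proposition~\ref{prop:one-to-one}; your inversion of that monotone relation, and your reading of the $a=0$ model as a pointwise algebraic law to justify ``instantaneous'', match the paper's intent. The paper's only additional qualifiers\,---\,that the statement fails for $\theta\ll 0$ and that generated power need not equal measured power\,---\,are remarks appended after the corollary, not part of its justification.
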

This corollary, which is illustrated by \citet[Figure~17]{BSSS}, for a velodrome, remains valid for a cyclist riding uphill and experiencing resistive forces.
However, it does not hold for a cyclist going down a sufficiently steep hill,~$\theta\ll0$\,.

Proposition~\ref{prop:one-to-one} and its corollaries are valid in the context of the instantaneous power generated by a cyclist.
As discussed in Appendix~\ref{sub:Qualifier}, this need not be tantamount to the power measured at that instant.
\subsection{Fixed-wheel drivetrain}
\label{sub:Qualifier}
\begin{figure}
	\centering
	\includegraphics[scale=0.5]{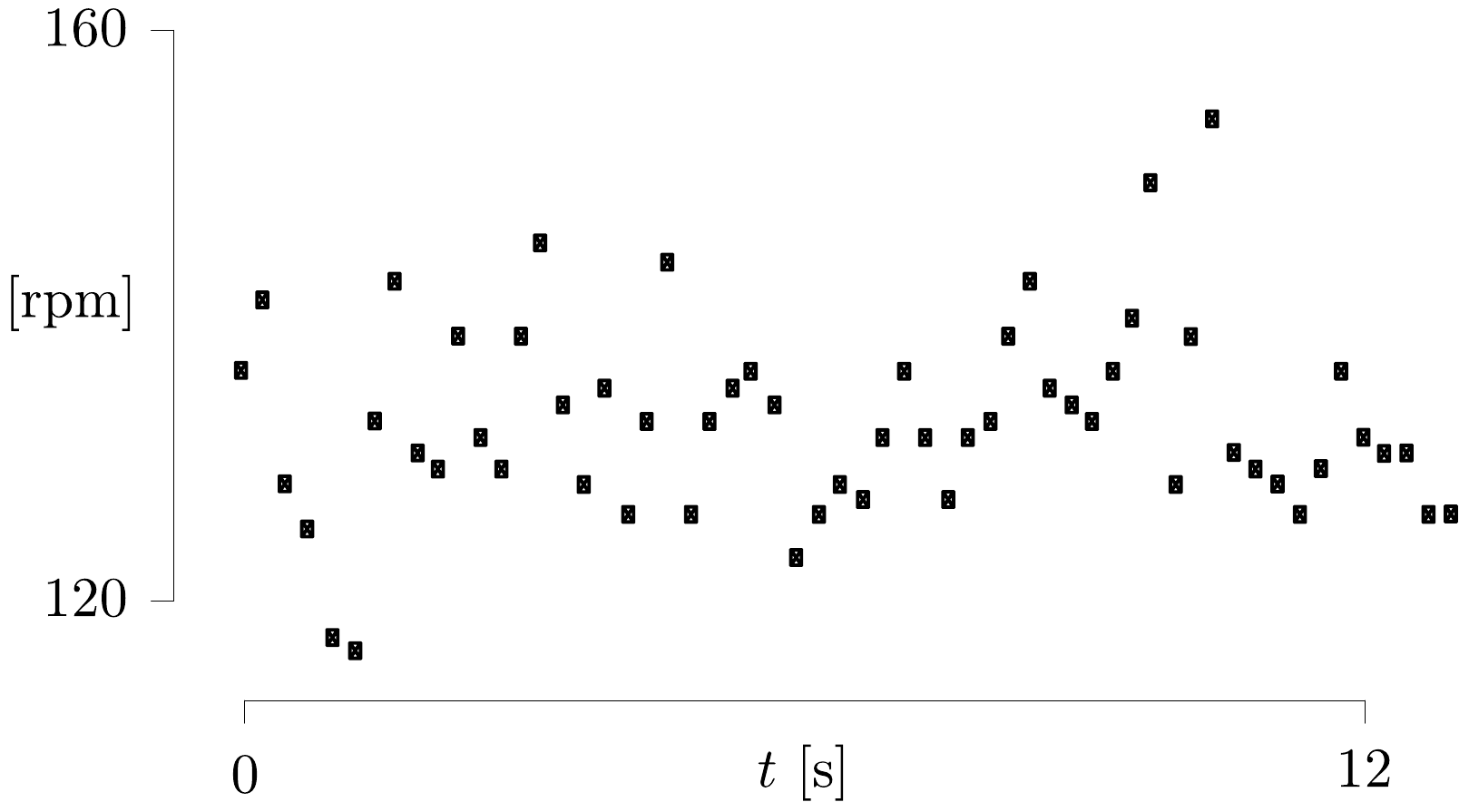}
	\caption{\small Measured cadence}
	\label{fig:FigCadence}
\end{figure}
Let us illustrate the difference between the power generated instantaneously by a cyclist and the measured power, for a fixed-wheel drivetrain.
Power,~$P$\,, is obtained from the measurements of $f_{\circlearrowright}$\,, which is the force applied to pedals, and $v_{\circlearrowright}$\,, which is the circumferential speed%
\footnote{Given a crank length, $v_{\circlearrowright}$ is obtained from measurements of cadence.}
of rotating pedals \citep[e.g.,][expression~(1)]{DSSbici1},
\begin{equation}
	\label{eq:formula2}
	P=f_{\circlearrowright}\,v_{\circlearrowright}\,.
\end{equation}
The resulting value of $P$ corresponds to the power generated instantaneously by a  cyclist if $v_{\circlearrowright}$ is an instantaneous consequence of $f_{\circlearrowright}$\,, only, as is the case of a free-wheel drivetrain; if no force is applied to pedals, the pedals do not rotate, regardless of the bicycle speed.
For a fixed-wheel drivetrain, for which there is a one-to-one relation between $v_{\circlearrowright}$ and the wheel speed, the momentum of the bicycle-cyclist system\,---\,which results in a pedal rotation, even without any force applied by a cyclist\,---\,might affect the value of~$v_{\circlearrowright}$\,.
\begin{figure}
	\centering
	\includegraphics[scale=0.5]{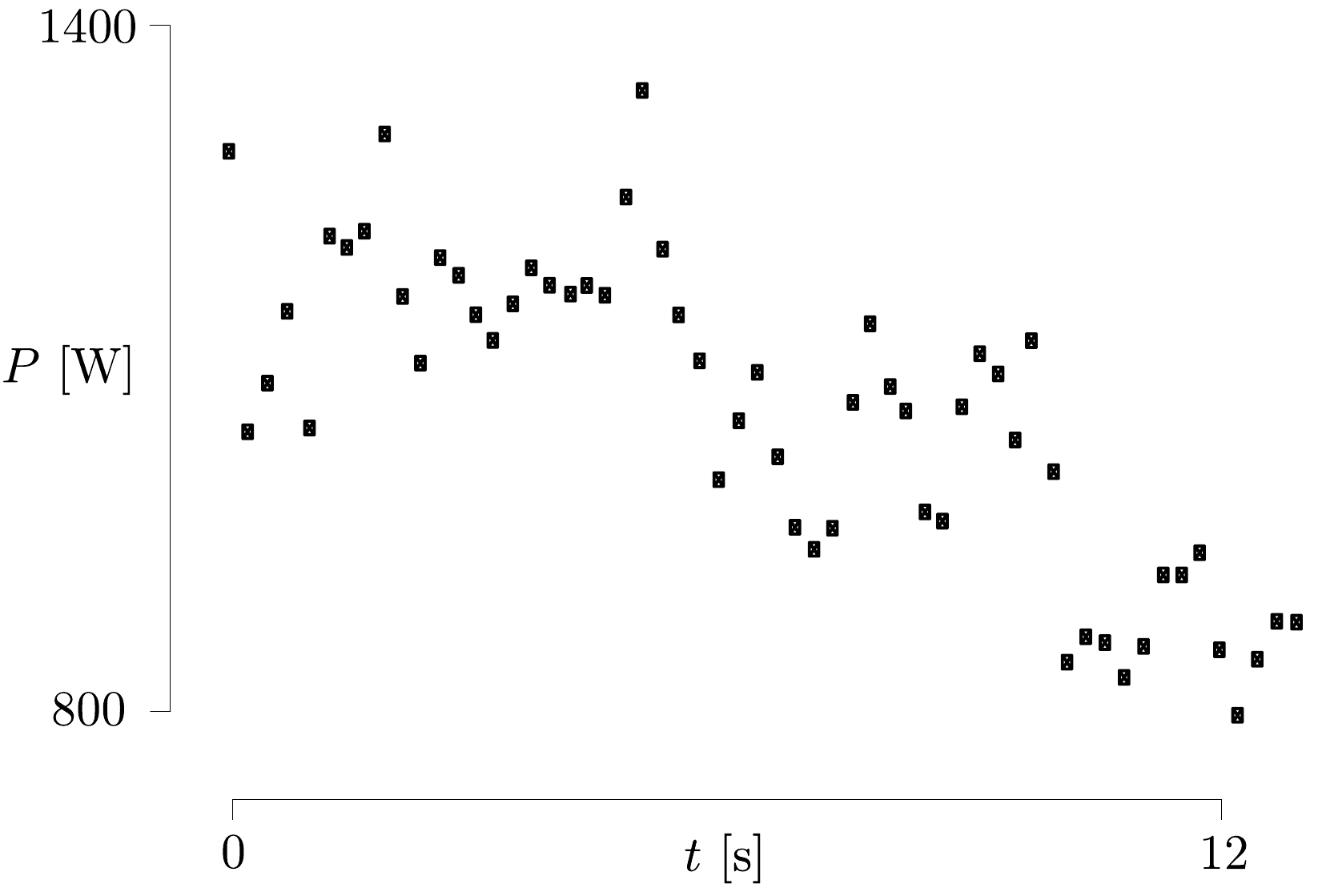}
	\caption{\small Measured power,~$P$}
	\label{fig:FigPower}
\end{figure}

To illustrate this issue, let us examine Figures~\ref{fig:FigCadence}, \ref{fig:FigPower} and \ref{fig:FigForce}, which correspond to the fixed-wheel measurements of, respectively, cadence, power and force from the second lap of a $1000$\,-metre time trial~(Mehdi Kordi, {\it pers.~comm.}, 2020).
Taking into account inaccuracies of measurements, the first figure exhibits a steadiness of cadence,%
\footnote{Strictly speaking, unless a sufficient force is applied to the pedals, there is necessarily a decrease of cadence.
 Only hypothetically\,---\,with no internal or external resistance on the flats\,---\,the cadence would remain constant.}
 which entails the steadiness of~$v_{\circlearrowright}$ and the steadiness of the wheel speed.
The second figure exhibits a decrease of power, which\,---\,in view of the steadiness of~$v_{\circlearrowright}$ and in accordance with expression~(\ref{eq:formula2})\,---\,is a consequence of the decrease of~$f_{\circlearrowright}$\,, shown in the third figure.
The behaviours observed in these figures are confirmed by the Augmented Dickey-Fuller test \citep{DickeyFuller}.
However, $F_{\!\leftarrow}$\,, in expression~(\ref{eq:formula}), need not decrease, since\,---\,in view of the steadiness of the wheel speed\,---\,$V_{\!\rightarrow}$ is approximately steady,%
\footnote{On a velodrome, $V_{\!\rightarrow}$ is equivalent to the wheel speed along the straights but not along the curves, due to the leaning of the bicycle-cyclist system.
Along the curves, and in general, $V_{\!\rightarrow}$ corresponds to the centre-of-mass speed~\citep{BSSS}.}
and so can be the power of the bicycle-cyclist system, for which we do not have direct measurements.
\begin{figure}
	\centering
	\includegraphics[scale=0.5]{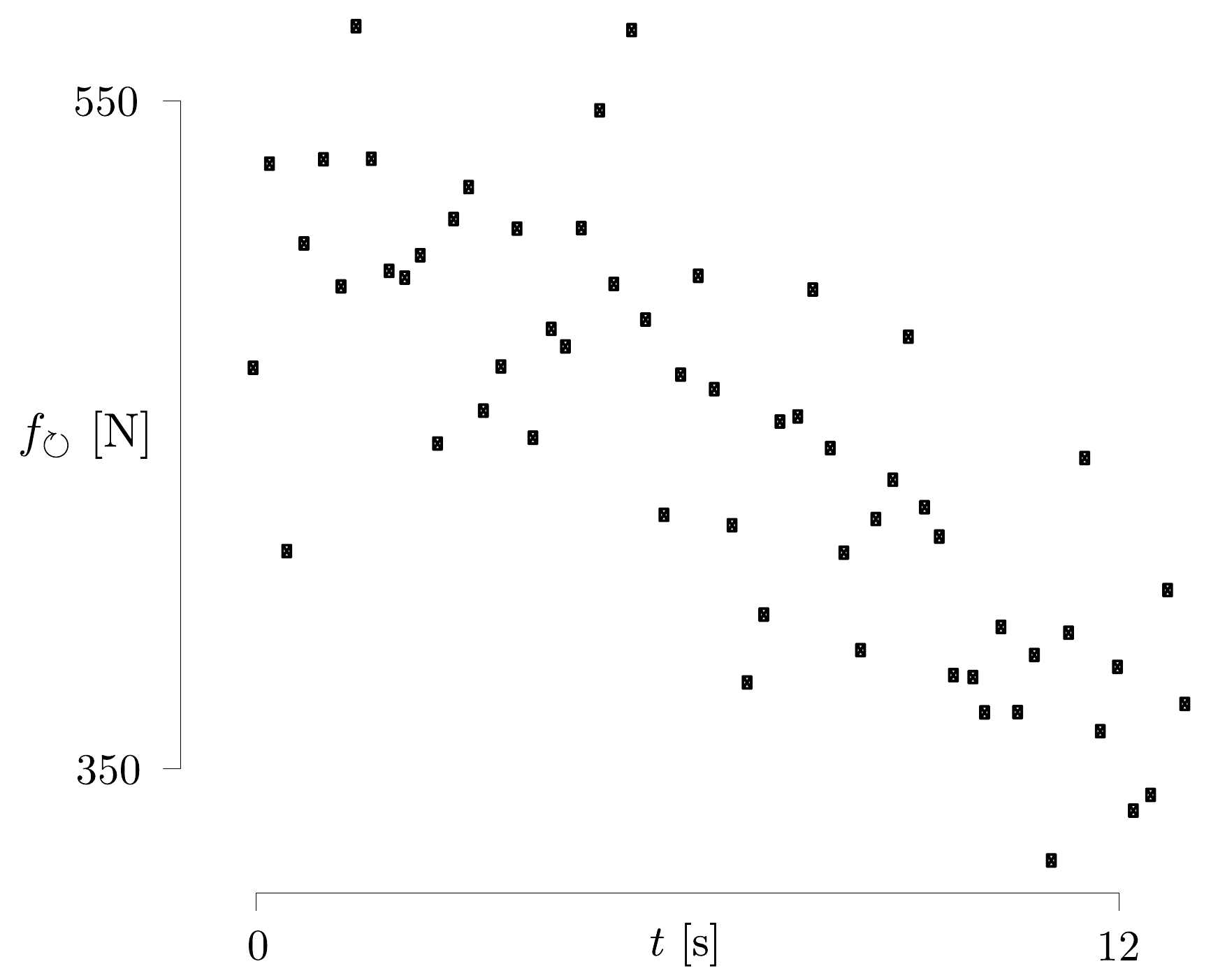}
	\caption{\small Measured force,~$f_{\circlearrowright}$}
	\label{fig:FigForce}
\end{figure}

In general, the measurements of power connected to the drivetrain, and the power of the system itself are distinct from one another.
For instance, for a free-wheel drivetrain, on a downhill on which a cyclist does not pedal, $f_{\circlearrowright}\,v_{\circlearrowright}=0\neq F_{\!\leftarrow}\,V_{\!\rightarrow}>0$\,.%
\footnote{On a downhill, due to gravitation, the bicycle-cyclist system can accelerate even if the cyclist does not apply any force to the pedals.
On a flat, {\it ceteris paribus}, the system, albeit gradually, must slow down.}
Similarly, for a fixed-wheel drivetrain, the instantaneous measurements of power need not represent the power generated by a cyclist at these instances, since the rotation of the pedals might be partially due to the force exerted by the cyclist, at a given moment, and partially due to the momentum of the already moving bicycle-cyclist system.
The issue remains even if the sensors are not incorporated within the pedals but in the cranks, a bottom bracket or a rear hub.
In each case, the sensors are connected to the drivetrain.

Thus, for a fixed-wheel drivetrain, the power-meter measurements represent an instantaneous power generated by a cyclist if the power and cadence are in a dynamic equilibrium.
Such an equilibrium is reachable\,---\,following an initial acceleration\,---\,during a steady effort, as is the case of a $4000$\,-metre individual pursuit \citep[e.g.,][Section~5]{BSSS} or the Hour Record, in contrast to accelerations followed by decelerations.
In the context of expression~(\ref{eq:formula2}), a dynamic equilibrium means that changes of $v_{\circlearrowright}$ are immediate responses solely to changes in~$f_{\circlearrowright}$\,.  
\section*{Acknowledgements}
We wish to acknowledge Len Bos and Rapha\"el Slawinski, for fruitful discussions, Mehdi Kordi for providing the measurements, used in Appendix~\ref{sub:Qualifier}, David Dalton, for his scientific editing and proofreading, Elena Patarini, for her graphic support, and Roberto Lauciello, for his artistic contribution. 
Furthermore, we wish to acknowledge Favero Electronics for inspiring this study by their technological advances and for supporting this work by providing us with their latest model of Assioma Duo power meters.
\section*{Conflict of Interest}
The authors declare that they have no conflict of interest.
\bibliographystyle{spbasic}
\bibliography{DSSbici2_arXiv.bib}
\end{appendix}
\end{document}